\newtheorem{proposition}{Proposition}
\newtheorem{lemma}{Lemma}
\theoremstyle{definition}
\newtheorem{example}{Example}
\newtheorem{definition}{Definition}
\newcommand{\real}{\mathbb R} 
\newcommand{\complex}{\mathbb C} 
\newcommand{\half}{\tfrac{1}{2}} 
\newcommand{\mo}[1]{\left| #1 \right|} 
\newcommand{\hi}{\mathcal{H}} 
\newcommand{\lh}{\mathcal{L(H)}} 
\newcommand{\sh}{\mathcal{S(H)}} 
\newcommand{\eh}{\mathcal{E(H)}} 
\newcommand{\ip}[2]{\left\langle\,#1\,|\,#2\,\right\rangle} 
\newcommand{\kb}[2]{|#1\,\rangle\langle\,#2|} 
\newcommand{\no}[1]{\left\|#1\right\|} 
\newcommand{\tr}[1]{\textrm{tr}\left[#1\right]} 
\newcommand{\lb}[1]{lb(#1)} 
\newcommand{\id}{I} 
\newcommand{\nul}{O} 
\newcommand{\salg}{\mathcal{F}} 
\newcommand{\var}{\textrm{Var}} 
\newcommand{\bor}[1]{\mathcal{B}(#1)} 
\newcommand{\ltwo}[1]{L^2(#1)} 
\newcommand{\va}{\mathbf{a}} 
\newcommand{\vb}{\mathbf{b}} 
\newcommand{\vc}{\mathbf{c}} 
\newcommand{\vg}{\mathbf{g}} 
\newcommand{\vn}{\mathbf{n}} 
\newcommand{\vnn}{\hat\vn} 
\newcommand{\vsigma}{\boldsymbol{\sigma}} 
\newcommand{\Aaa}{A(\alpha,\va)} 
\newcommand{\A}{\mathsf{A}}
\newcommand{\B}{\mathsf{B}}
\newcommand{\C}{\mathsf{C}}
\newcommand{\E}{\mathsf{E}}
\newcommand{\F}{\mathsf{F}}
\newcommand{\G}{\mathsf{G}}
\renewcommand{\P}{\mathcal{P}}
\newcommand{\Ea}{\mathsf{E}^{1,\va}} 
\newcommand{\Eb}{\mathsf{E}^{1,\vb}} 
\newcommand{\Ec}{\mathsf{E}^{1,\vc}} 
\newcommand{\Eaa}{\mathsf{E}^{\alpha,\mathbf{a}}} 
\newcommand{\Ebb}{\mathsf{E}^{\beta,\mathbf{b}}} 
\newcommand{\Q}{\mathsf{Q}}
\renewcommand{\P}{\mathsf{P}}
\newcommand{\Qmu}{\mathsf{Q}^{\mu}}
\newcommand{\Pnu}{\mathsf{P}^{\nu}}
\begin{document}

\title{Notes on Joint Measurability of Quantum Observables}

\begin{abstract}
For sharp quantum observables the following facts hold:
(i) if we have a collection of sharp observables and each pair of them is jointly measurable, then they are jointly measurable all together; (ii) if two sharp observables are jointly measurable, then their joint observable is unique and it gives the greatest lower bound for the effects corresponding to the observables; (iii) if we have two sharp observables and their every possible two outcome partitionings are jointly measurable, then the observables themselves are jointly measurable. We show that, in general, these properties do not hold. Also some possible candidates which would accompany joint measurability and generalize these apparently useful properties are discussed. 
\end{abstract}

\author[Heinosaari]{Teiko Heinosaari}
\address{Teiko Heinosaari, Department of Physics, University of Turku, 20014 Turku, Finland}
\email{heinosaari@gmail.com}

\author[Reitzner]{Daniel Reitzner}
\address{Daniel Reitzner, Research Center for Quantum Information, Slovak Academy of Sciences, D\'ubravsk\'a cesta 9, 845 11 Bratislava, Slovakia}
\email{daniel.reitzner@savba.sk}

\author[Stano]{Peter Stano}
\address{Peter Stano, Research Center for Quantum Information, Slovak Academy of Sciences, D\'ubravsk\'a cesta 9, 845 11 Bratislava, Slovakia}
\email{peter.stano@savba.sk}

\maketitle

\section{Introduction}\label{sec:intro}

The fact that not all pairs of quantum observables are jointly measurable is usually mentioned as one of the characteristic features of quantum mechanics. Joint measurability means that measurements of two given observables can be interpreted as originating in a measurement of a single (joint) observable. Joint (non-)measurability  is hence related to the limitations of measuring and manipulating quantum objects.

Traditionally, joint measurability was taken to be synonym for commutativity. This is indeed the case if observables are represented solely by selfadjoint operators. However, nowadays it is recognized that the correct mathematical representation for quantum observables is given by positive operator valued measures \cite{PSAQT82}, \cite{OQP97}, and selfadjoint operators correspond only to a specific class of ideal observables (here called sharp observables). This generalization leads to the conclusion that commutativity is only a limited criterion for joint measurability in this more general setting \cite{LaPu97}, \cite{Lahti03}. Namely, commutative observables are jointly measurable but the converse need not be true.

There are numerous studies on joint measurements of some special collections of observables, the most prominent cases being the position--momentum pair and spin observables along different axes. In these notes we are interested in the general features of joint measurability rather than any special case. However, the conclusions we obtain are based on recent progress in the above mentioned examples.   

Unlike in the case of sharp observables (projection valued measures or equivalently selfadjoint operators), there is no general theory for joint measurability of general observables (positive operator valued measures). Two sharp observables are jointly measurable if and only if they commute, but this kind of a simple criterion is not known for general observables.

For sharp observables the following facts hold:
(i) if we have a collection of sharp observables and each pair of them is jointly measurable, then they are jointly measurable all together; (ii) if two sharp observables are jointly measurable, then their joint observable is unique and it gives the greatest lower bound for the effects corresponding to the observables; (iii) if we have two sharp observables and their every possible two outcome partitionings are jointly measurable, then the observables themselves are jointly measurable. 

These properties are often useful. Namely, property (i) implies that to test the joint measurability of a collection of observables, it would be enough to test them pairwisely. Property (ii) simplifies the problem by fixing the form of a joint observable, while  
property (iii) implies that joint measurability of two observables reduces to joint measurability of certain two outcome observables. 

In Sections 2--3 we first define the needed notions and then provide some examples. We then study the properties (i)--(iii) (in Sections 4--6, respectively) and we show that, in general, they do not hold. We, however, also suggest some possible candidates which would accompany joint measurability and replace the apparently useful properties (i)-(iii). We note that this work is not intended to give the final answers in these issues but to provoke further investigations.

\section{Joint measurability of quantum observables}

Let $\hi$ be a complex separable Hilbert space and $\lh$ the set of bounded
operators on $\hi$. A positive operator $\varrho\in\lh$ having trace one is
called a \emph{state} and we denote by $\sh$ the set of all states. A
positive operator $A$ bounded from above by the unity operator $\id$ is called
an \emph{effect} and the set of all effects is denoted by $\eh$. We say
that the null operator $\nul$ and the unity operator $\id$ are \emph{trivial
effects}. 

\begin{definition}
\label{def:observable}
Let $\Omega$ be a nonempty set and $\salg$ a $\sigma$-algebra\footnote{For convenience, we always assume that all the one element sets $\{x\}$ belong to the $\sigma$-algebra $\salg$.} of subsets of $\Omega$. A mapping $\A:\salg\to\eh$ is  an \emph{observable} if
the set function 
\begin{equation*}
\salg\ni X\mapsto \ip{\psi}{\A(X)\psi}\in\real
\end{equation*}
is a probability measure for all unit vectors $\psi\in\hi$. The measurable space $(\Omega,\salg)$ is called an \emph{outcome space} of $\A$.
\end{definition}

For an observable $\A:\salg\to\eh$ and a state $\varrho\in\sh$, we denote by $p^{\A}_{\varrho}$ the following mapping from $\salg$ to the interval $[0,1]$,
\begin{equation*}
p^{\A}_{\varrho}(X)=\tr{\varrho\A(X)} \, .
\end{equation*}
The properties of $\A$, resulting from Definition \ref{def:observable}, guarantee that $p^{\A}_{\varrho}$ is a probability measure. The number $p^{\A}_{\varrho}(X)$ is interpreted as the probability of getting measurement outcome $x$ belonging to $X$, when the system is in the state $\varrho$ and the observable $\A$ is measured.

As a short-hand notation, we denote $\A(x)\equiv \A(\{x\})$ for $x\in\Omega$. If the set $\Omega$ is countable, then the collection of effects $\A(x)$, $x\in\Omega$, determines the observable $\A$. Namely, for any $X\subseteq\Omega$, we have
\begin{equation*}
\A(X)=\sum_{x\in X} \A(x) \, .
\end{equation*}
In particular, the normalization condition $\A(\Omega)=\id$ reads $\sum_{x\in\Omega} \A(x)=\id$.

For $\sigma$-algebras $\salg_1,\salg_2,\ldots,\salg_n$, we denote by $\salg_1\otimes\salg_2\otimes\cdots\otimes\salg_n$ the product $\sigma$-algebra, generated by the sets of the form $X_1\times X_2\times\cdots\times X_n$, $X_j\in\salg_j$.

\begin{definition}
Observables $\A^1,\A^2,\ldots,\A^n$ are \emph{jointly measurable} if there exists an observable $\G$, defined on the outcome space $(\Omega_1\times\Omega_2\times\cdots\times\Omega_n,\salg_1\otimes\salg_2\otimes\cdots\otimes\salg_n)$, such that
\begin{eqnarray*}
&& \G(X_1\times\Omega_2\times\cdots\times\Omega_n) = \A^1(X_1) \\
&& \G(\Omega_1\times X_2\times\cdots\times\Omega_n) = \A^2(X_2) \\
&& \quad \vdots \\
&& \G(\Omega_1\times\Omega_2\times\cdots\times X_n) = \A^n(X_n)
\end{eqnarray*}
for all $X_1\in\salg_1,\ldots, X_n\in\salg_n$. In this case $\G$ is a \emph{joint observable} of $\A^1,\A^2,\ldots,\A^n$.
\end{definition}

To illustrate the definition of joint measurability, let $\A$ and $\B$ be two observables with finite outcome spaces $\Omega_{\A}=\{x_1,x_2,\ldots,x_n\}$ and $\Omega_{\B}=\{y_1,y_2,\ldots,y_m\}$, respectively. The condition for joint measurability of $\A$ and $\B$ is the existence of such $\G$ defined on the outcome space $\Omega_{\A}\times\Omega_{\B}$, for which
\begin{equation}\label{eq:sum_rule}
\A(x_i)=\sum_{j=1}^m \G(x_i,y_j), \quad \B(y_j)=\sum_{i=1}^n \G(x_i,y_j) \, .
\end{equation}

If observable $\A$ has projections as its values (i.e. $\A(X)^2=\A(X)$ for all $X\in\salg$), it is called \emph{sharp observable}. Two sharp observables are jointly measurable if and only if they commute. (For convenience, this well-known result is proved in a slightly more general form in Appendix as we use it constantly.) Commutativity of $\A$ and $\B$ means that all pairs of their effects commute, i.e., $[\A(X),\B(Y)]=0$ for all $X$, $Y$. We stress that for general observables commutativity and joint measurability are not equivalent concepts.

\section{Examples}

\subsection{Simple qubit observables}\label{sec:example}

A two outcome observable is called \emph{simple}. Qubit observables are those defined on the qubit Hilbert space $\complex^2$. One possible characterization of a simple qubit observable is by providing an effect corresponding to one measurement outcome, since the other one is fixed by the normalization. A selfadjoint operator on $\complex^2$ can be parametrized by four real parameters $(\alpha,\va)\in\real^4$,
\begin{equation}
\Aaa:=\half \left(\alpha \id + \va\cdot\vsigma \right) \, ,
\label{qubit effect definition}
\end{equation}
where $\vsigma\equiv(\sigma_1,\sigma_2,\sigma_3)$ is the triplet of Pauli matrices. Such operator $\Aaa$ is an effect if and only if
\begin{equation}
\no{\va} \leq \alpha \leq 2-\no{\va} \, ,
\end{equation}
and a nontrivial projection if and only if
\begin{equation}
\alpha=\no{\va}=1 \, .
\end{equation}
We denote by $\Eaa$ the simple qubit observable corresponding to an effect $\Aaa$. The labeling of the measurement outcomes of $\Eaa$ is irrelevant for joint measurability questions, but for convenience we fix the outcomes to be $0$ and $1$, so that $\Eaa$ is defined as
\begin{equation*}
\Eaa(1)=\Aaa\, , \quad \Eaa(0)=\id-\Aaa \, .
\end{equation*}

In Proposition \ref{prop:two-qubit} we summarize some known results on joint measurability of two simple qubit observables which we will use later. 

\begin{proposition}\label{prop:two-qubit}
Let $\Eaa$ and $\Ebb$ be two qubit observables.
\begin{itemize}
\item (Busch \cite[Theorem 4.5.]{Busch86}) If $\alpha=\beta=1$, then $\Eaa$ and $\Ebb$ are jointly measurable if and only if
\begin{equation}\label{jm:busch}
\no{\va + \vb} + \no{\va - \vb} \leq 2 \, .
\end{equation}
\item (Moln\'ar \cite[Lemma 2]{Molnar01b}) If $\alpha=\no{\va}$, $\beta=\no{\vb}$ and $\va\nparallel\vb$, then $\Eaa$ and $\Ebb$ are jointly measurable if and only if
\begin{equation}\label{jm:molnar}
\no{\va+\vb} + \no{\va} +\no{\vb}\leq 2 \, .
\end{equation}
\item (Liu et al. \cite[Theorem 1]{LiLiYuCh07}) If $\alpha=1$ and $\va\bot\vb$, then $\Eaa$ and $\Ebb$ are jointly measurable if and only if
\begin{equation}\label{jm:chinese}
2 \no{\va} \leq \sqrt{\beta^2 - \no{\vb}^2} + \sqrt{(2-\beta)^2 - \no{\vb}^2} \, .
\end{equation}
\end{itemize}
\end{proposition}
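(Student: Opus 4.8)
The three statements all concern joint measurability of two simple qubit observables, and I would handle them through a single common reduction. Since $\Eaa$ and $\Ebb$ each have outcomes $\{0,1\}$, by \eqref{eq:sum_rule} they are jointly measurable exactly when there are effects $\G(i,j)$, $i,j\in\{0,1\}$, with $\sum_{i,j}\G(i,j)=\id$, $\G(1,0)+\G(1,1)=\Aaa$, and $\G(0,1)+\G(1,1)=\Abb$. These relations determine the whole array from the single effect $H:=\G(1,1)$, so joint measurability is equivalent to the existence of a selfadjoint $H$ with
\begin{equation*}
H\geq\nul,\qquad \Aaa-H\geq\nul,\qquad \Abb-H\geq\nul,\qquad \id-\Aaa-\Abb+H\geq\nul .
\end{equation*}
Writing $H=\half(\tau\,\id+\vc\cdot\vsigma)$ as in \eqref{qubit effect definition} and using that $\half(s\,\id+\vs\cdot\vsigma)\geq\nul$ iff $s\geq\no{\vs}$, the problem becomes that of finding $(\tau,\vc)\in\real\times\real^3$ with
\begin{equation*}
\tau\geq\no{\vc},\quad \alpha-\tau\geq\no{\va-\vc},\quad \beta-\tau\geq\no{\vb-\vc},\quad 2-\alpha-\beta+\tau\geq\no{\va+\vb-\vc}.
\end{equation*}
Each of these four conditions is the preimage of a Lorentz cone under an affine map, so the set of admissible $(\tau,\vc)$ is convex.

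For Busch's claim, $\alpha=\beta=1$, I would use the affine involution $(\tau,\vc)\mapsto(\tau,\va+\vb-\vc)$, which swaps the first condition with the fourth and the second with the third and hence preserves the (convex, so midpoint-closed) solution set: if any solution exists, one exists with $\vc=\half(\va+\vb)$. For that choice all four conditions collapse to $\half\no{\va+\vb}\leq\tau\leq 1-\half\no{\va-\vb}$, which is solvable in $\tau$ precisely when \eqref{jm:busch} holds.

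For Moln\'ar's claim, $\alpha=\no{\va}$, $\beta=\no{\vb}$ and $\va\nparallel\vb$ (so $\va,\vb\neq\vnull$), the point is rigidity: here $\Aaa=\no{\va}P$ and $\Abb=\no{\vb}Q$ with $P\neq Q$ one-dimensional projections. From $\G(1,0)+\G(1,1)=\no{\va}P$ with both summands positive, both have range inside $\ran P$, so $H=\G(1,1)$ is a scalar multiple of $P$; arguing likewise on the other marginal, $H$ is a multiple of $Q$; since $P\neq Q$ have trace one, this forces $H=\nul$. Thus there is a unique candidate joint observable, $\G(1,0)=\Aaa$, $\G(0,1)=\Abb$, $\G(0,0)=\id-\Aaa-\Abb$, and joint measurability reduces to $\id-\Aaa-\Abb\geq\nul$, i.e. to \eqref{jm:molnar}.

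For the claim of Liu et al., $\alpha=1$ and $\va\bot\vb$, a component of $\vc$ perpendicular to the plane of $\va,\vb$ only increases each of the four norms above, so one may take $\vc$ in that plane; by orthogonality, pick coordinates with $\va=(\no{\va},0)$, $\vb=(0,\no{\vb})$, $\vc=(c_1,c_2)$. Squaring the four conditions, while keeping the sign requirements $0\leq\tau\leq 1$ and $\beta-1\leq\tau\leq\beta$, shows that an admissible $(\tau,\vc)$ exists iff for some such $\tau$ the four discs with centres $(0,0),(\no{\va},0),(0,\no{\vb}),(\no{\va},\no{\vb})$ and radii $\tau,1-\tau,\beta-\tau,1-\beta+\tau$ have a common point. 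Eliminating $c_1,c_2,\tau$ from this system then yields exactly \eqref{jm:chinese}. This last elimination --- verifying that mutual compatibility of the four discs collapses to precisely that closed-form inequality --- is the computational heart of \cite{LiLiYuCh07}, and it is the step I expect to be the main obstacle; the other two claims are short once the reduction above is set up.
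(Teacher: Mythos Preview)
Your reduction is exactly the one the paper uses: parametrize a putative joint observable by the single effect $\G(1,1)=\half(\gamma\id+\vg\cdot\vsigma)$ and translate positivity of the four operators $\G(i,j)$ into four Lorentz-cone inequalities on $(\gamma,\vg)$. The paper stops right there, simply stating that ``inspection of these constraints gives the results cited in Proposition~\ref{prop:two-qubit}'' and referring the reader to \cite{Busch86,Molnar01b,LiLiYuCh07} (and to \cite{BuSc08,StReHe08,YuLiLiOh08} for the general theorem covering all three). So there is essentially no paper proof to compare against beyond the common setup.

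Your proposal actually goes further than the paper and supplies clean case-specific arguments. The involution-plus-convexity trick for the Busch case is correct and elegant: the map $(\tau,\vc)\mapsto(\tau,\va+\vb-\vc)$ does permute the four constraints in pairs when $\alpha=\beta=1$, and averaging a solution with its image lands on $\vc=\half(\va+\vb)$. The rigidity argument for the Moln\'ar case is also correct: $\Aaa=\no{\va}P$ with $P$ rank one forces $\G(1,1)$ into $\ran P$, and symmetrically into $\ran Q$, so $\G(1,1)=\nul$ once $P\neq Q$; then $\G(0,0)\geq\nul$ is exactly \eqref{jm:molnar}. For the Liu et al.\ case you correctly note that any out-of-plane component of $\vc$ can only hurt all four constraints, reducing to a planar four-disc intersection problem; your honest acknowledgement that the final elimination to \eqref{jm:chinese} is the computational content of \cite{LiLiYuCh07} matches the paper's own stance of deferring to the literature there.

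In short: same approach as the paper, but you fill in details the paper leaves to citations, and those details are sound.
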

All these cases follow from a general theorem, proved recently in different forms in \cite{BuSc08, StReHe08, YuLiLiOh08}. 

A joint observable $\G$ for observables $\Eaa$ and $\Ebb$, if it exists, is a four outcome observable, and Eq.~\eqref{eq:sum_rule} in this case reads
\begin{equation*}\begin{split}
\Eaa(1)& =\G(1,1)+\G(1,0)\, , \qquad \Ebb(1)=\G(1,1)+\G(0,1)\, , \\
\Eaa(0)&=\G(0,0)+\G(0,1)\, , \qquad \Ebb(0)=\G(0,0)+\G(1,0)\, .
\end{split}\end{equation*}
From these equations and the normalization condition follows that $\G$ is determined by just one effect, say $\G(1,1)$. All other effects forming $\G$ are determined by this effect and the observables $\Eaa$ and $\Ebb$. Effect $\G(1,1)$ is of the form given by Eq.~\eqref{qubit effect definition}, hence $\G(1,1)=\half (\gamma\id+\vg\cdot\vsigma)$ for some parameters $(\gamma,\vg)\in\real^4$. The requirement that the other three operators $\G(1,0)$, $\G(0,1)$ and $\G(0,0)$ are actually effects leads then to constrains for the parameters $\gamma$ and $\vg$. Inspection of these constrains gives the results cited in Proposition \ref{prop:two-qubit}.

In Proposition \ref{prop:three-qubit} we combine the results of Busch \cite{Busch86} and Andersson et al. \cite{BrAn07} concerning the joint measurability of three simple qubit observables. 

\begin{proposition}\label{prop:three-qubit}
Let $\va,\vb,\vc$ be three orthogonal vectors and $\Ea, \Eb, \Ec$ the corresponding qubit observables. Then $\Ea$, $\Eb$ and $\Ec$ are jointly measurable if and only if
\begin{equation}\label{jm:three}
\no{\va}^2+\no{\vb}^2+\no{\vc}^2 \leq 1 \, . 
\end{equation}
\end{proposition}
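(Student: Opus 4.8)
The plan is to prove the two implications in \eqref{jm:three} separately, arguing both times directly with an eight-outcome (candidate) joint observable. First I would fix notation: since the labels of measurement outcomes are irrelevant for joint measurability, relabel the outcomes of each simple observable by $\{-1,+1\}$, so that $\Ea(s_1)=\half(\id+s_1\,\va\cdot\vsigma)$ for $s_1\in\{-1,+1\}$ (and similarly for $\Eb$ and $\Ec$), and recall that a qubit effect $\half(\lambda\id+\vr\cdot\vsigma)$ is positive precisely when $\lambda\geq\no{\vr}$. A joint observable of $\Ea,\Eb,\Ec$ is then an eight-outcome observable $\G$ on $\{-1,+1\}^3$ whose three single-index marginals are $\Ea$, $\Eb$ and $\Ec$.

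For the sufficiency direction, assuming $\no{\va}^2+\no{\vb}^2+\no{\vc}^2\leq1$, I would simply write down the symmetric candidate
\[
\G(s)=\tfrac18\bigl(\id+(s_1\va+s_2\vb+s_3\vc)\cdot\vsigma\bigr),\qquad s=(s_1,s_2,s_3)\in\{-1,+1\}^3 .
\]
Since $\sum_{s_1=\pm1}s_1=0$, summing $\G$ over any two of the three indices annihilates the two Bloch vectors attached to those indices, so the three marginals come out to be exactly $\Ea,\Eb,\Ec$ (and $\sum_s\G(s)=\id$) with essentially no computation. The only substantive point is that each $\G(s)$ must be an effect, which reduces to $\no{s_1\va+s_2\vb+s_3\vc}\leq1$; by the mutual orthogonality of $\va,\vb,\vc$ this norm equals $(\no{\va}^2+\no{\vb}^2+\no{\vc}^2)^{1/2}$ for every sign pattern $s$, so the hypothesis is exactly what is needed.

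For the necessity direction, I would take an arbitrary joint observable and write $\G(s)=\half(\lambda_s\id+\vr_s\cdot\vsigma)$ for $s=(s_1,s_2,s_3)\in\{-1,+1\}^3$, with $\lambda_s\geq\no{\vr_s}$. Normalization forces $\sum_s\lambda_s=2$, hence $\sum_s\no{\vr_s}\leq2$, and comparing Bloch vectors in the three marginal identities gives
\[
\sum_s s_1\vr_s=2\va,\qquad\sum_s s_2\vr_s=2\vb,\qquad\sum_s s_3\vr_s=2\vc .
\]
The key step is to take the inner product of these three identities with $\va$, $\vb$, $\vc$ respectively and add them up: the left-hand side becomes $2(\no{\va}^2+\no{\vb}^2+\no{\vc}^2)$, while the right-hand side equals $\sum_s\vr_s\cdot(s_1\va+s_2\vb+s_3\vc)$, which by Cauchy--Schwarz is at most $\sum_s\no{\vr_s}\,\no{s_1\va+s_2\vb+s_3\vc}$; and orthogonality once more makes $\no{s_1\va+s_2\vb+s_3\vc}=(\no{\va}^2+\no{\vb}^2+\no{\vc}^2)^{1/2}$ independent of $s$. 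Hence
\[
2\bigl(\no{\va}^2+\no{\vb}^2+\no{\vc}^2\bigr)\leq\bigl(\no{\va}^2+\no{\vb}^2+\no{\vc}^2\bigr)^{1/2}\sum_s\no{\vr_s}\leq2\bigl(\no{\va}^2+\no{\vb}^2+\no{\vc}^2\bigr)^{1/2} ,
\]
and cancelling (the case $\va=\vb=\vc=\vnull$ being trivial) yields \eqref{jm:three}.

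I do not expect a genuine technical obstacle here, since the whole argument is only a handful of lines; the point that requires care is conceptual. One should resist trying to reduce the problem to the pairwise criterion \eqref{jm:busch}: for orthogonal vectors that gives only $\no{\va}^2+\no{\vb}^2\leq1$ together with its two cousins, which is strictly weaker than \eqref{jm:three} --- exactly the situation in which property~(i) fails --- so the proof must exploit all eight outcomes of $\G$ simultaneously. The real content of the necessity half is the Cauchy--Schwarz step, and it is the sign-independence of $\no{s_1\va+s_2\vb+s_3\vc}$ forced by the orthogonality of $\va,\vb,\vc$ that makes the bound extracted from an arbitrary joint observable tight and matching the explicit construction. (Alternatively, the necessity half can also be read off from the general multi-qubit joint measurability criterion of \cite{BuSc08,StReHe08,YuLiLiOh08}, but the direct argument above is shorter and self-contained.)
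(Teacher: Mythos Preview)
Your argument is correct in both directions. The sufficiency half is the standard symmetric construction, and the necessity half is a clean Cauchy--Schwarz estimate that crucially uses the sign-independence of $\no{s_1\va+s_2\vb+s_3\vc}$ granted by orthogonality; the bound $\sum_s\no{\vr_s}\leq\sum_s\lambda_s=2$ coming from positivity and normalization is exactly what closes the inequality. One cosmetic slip: when you say ``the left-hand side becomes $2(\no{\va}^2+\no{\vb}^2+\no{\vc}^2)$, while the right-hand side equals $\sum_s\vr_s\cdot(s_1\va+s_2\vb+s_3\vc)$'', you have the two sides of the dotted identities swapped --- it is the \emph{right}-hand sides $2\va,2\vb,2\vc$ that produce the squared norms. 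The displayed inequality chain is nonetheless correct as written.

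As for comparison with the paper: there is nothing to compare, because the paper does not prove this proposition at all. It merely records the result as a combination of results of Busch~\cite{Busch86} and Andersson et al.~\cite{BrAn07} and moves on. Your proof is therefore a genuine addition: it is short, self-contained, and avoids appealing to the general qubit coexistence criterion of \cite{BuSc08,StReHe08,YuLiLiOh08} or to the cited sources. Your closing remark about why one must work with all eight outcomes rather than reduce to the pairwise criterion \eqref{jm:busch} is also apt and in fact anticipates exactly the phenomenon the paper exploits in the example following Proposition~\ref{prop:2sharp->3joint}.
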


\subsection{Position and momentum observables}

Let us consider the usual position and momentum observables $\Q$ and $\P$ of a free spin-0 particle moving in one dimension. The outcome space for both of these observables is the Borel $\sigma$-algebra $\bor{\real}$. The Hilbert space for this system is  $\ltwo{\real}$, the space of square integrable functions. For a pure state $\varrho=\kb{\psi}{\psi}$, the corresponding probability measures are
\begin{equation*}
p^{\Q}_{\varrho}(X)=\int_X \mo{\psi(q)}^2 dq\, ,\quad p^{\P}_{\varrho}(Y)=\int_Y \mo{\hat{\psi}(p)}^2 dp \, ,
\end{equation*}
where $\hat{\psi}$ is the Fourier transform of $\psi$. 

The observables $\Q$ and $\P$ are sharp and they do not commute. Therefore, there is no joint observable for $\Q$ and $\P$. It was pointed out by Davies \cite{QTOS76} and Ali \& Prukovecki \cite{AlPr77a} that there exist observables $\Qmu$ and $\Pnu$ which can be considered as approximate versions of $\Q$ and $\P$, respectively, and which have a joint observable. Here we shortly recall this example; for more details, we refer to \cite{BuHeLa07} and references given therein. 

For a probability measure $\mu$ on $\real$, we define observable $\Qmu$ by formula
\begin{equation}
\Qmu(X)=\int \Q(X-q) \ d\mu(q)\, , \quad X\in\bor{\real}\, .
\end{equation}
The observable $\Pnu$ is defined in a similar way using the canonical momentum observable $\P$ and a probability measure $\nu$. 

If the probability measures $\mu$ and $\nu$ are properly chosen, then $\Qmu$ and $\Pnu$ have a joint observable. The characterization of jointly measurable pairs of $\Qmu$ and $\Pnu$ was completed by Carmeli et al. \cite{CaHeTo05} and we summarize it in the following proposition.

\begin{proposition}\label{prop:pos-mom-joint-condition}
Let $\Qmu$ and $\Pnu$ be position and momentum observables, respectively. They are jointly measurable if and only if there exists a positive trace class operator $T\in\lh$ having trace 1 and satisfying
\begin{equation}\label{eq:pos-mom-joint-condition}
\mu=p^{\Q}_{\Pi T \Pi}\, ,\qquad \nu=p^{\P}_{\Pi T\Pi} \, .
\end{equation}
Here $\Pi:\hi\to\hi$ is the parity operator $\Pi\psi(x)=\psi(-x)$. 
\end{proposition}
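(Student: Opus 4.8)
The plan is to identify the joint observables of $\Qmu$ and $\Pnu$ with the \emph{covariant phase space observables}. Recall that to a positive trace class operator $T$ with $\tr{T}=1$ one associates the phase space observable $\G^{T}$ on $(\real^2,\bor{\real^2})$ generated by $T$, namely
\begin{equation*}
\G^{T}(Z)=\frac{1}{2\pi}\int_{Z} W(q,p)\,T\,W(q,p)^{*}\,dq\,dp\, ,
\end{equation*}
where $W(q,p)$ are the Weyl (phase space translation) unitaries on $\ltwo{\real}$. This $\G^{T}$ is translation covariant, $W(q',p')\,\G^{T}(Z)\,W(q',p')^{*}=\G^{T}(Z+(q',p'))$, and computing its Cartesian margins --- using that $W(q,p)$ translates $\Q$ by $q$ and $\P$ by $p$, and that integrating the displaced operator over the conjugate variable collapses it to a multiplication operator --- gives
\begin{equation*}
\G^{T}(X\times\real)=\Qmu(X)\ \text{ with }\ \mu=p^{\Q}_{\Pi T\Pi}\, ,\qquad \G^{T}(\real\times Y)=\Pnu(Y)\ \text{ with }\ \nu=p^{\P}_{\Pi T\Pi}\, .
\end{equation*}
The parity operator enters here because displacing $T$ by $W(q,p)$ reflects the underlying position (resp.\ momentum) distribution. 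In particular, if a $T$ as in the statement exists, then $\G^{T}$ is a joint observable of $\Qmu$ and $\Pnu$, which proves the ``if'' direction.

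For the converse, let $\G$ be any joint observable of $\Qmu$ and $\Pnu$. The key point is that each margin is itself translation covariant, $W(q',p')\,\Qmu(X)\,W(q',p')^{*}=\Qmu(X+q')$ and $W(q',p')\,\Pnu(Y)\,W(q',p')^{*}=\Pnu(Y+p')$, because $\Qmu$ is a function of position alone and $\Pnu$ of momentum alone. Hence for every $(q',p')\in\real^2$ the map $Z\mapsto W(q',p')\,\G(Z-(q',p'))\,W(q',p')^{*}$ is again a joint observable of $\Qmu$ and $\Pnu$. Since the additive group $\real^2$ is amenable, one averages these translates --- over larger and larger squares in $\real^2$, passing to a weak-$*$ limit, equivalently by applying an invariant mean --- to obtain a joint observable $\tilde\G$ that is covariant in the above sense. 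Now invoke the structure theorem for covariant phase space observables (Holevo, Werner; see \cite{CaHeTo05} and the references therein): every such $\tilde\G$ equals $\G^{T}$ for a unique positive trace class operator $T$ with $\tr{T}=1$. Comparing the margins of $\G^{T}$ computed above with $\tilde\G(X\times\real)=\Qmu(X)$ and $\tilde\G(\real\times Y)=\Pnu(Y)$ forces $\mu=p^{\Q}_{\Pi T\Pi}$ and $\nu=p^{\P}_{\Pi T\Pi}$, i.e.\ Eq.~\eqref{eq:pos-mom-joint-condition}; the uniqueness of $T$ is a byproduct since $T\mapsto\G^{T}$ is injective.

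The step I expect to be the main obstacle is the averaging in the second paragraph: one must pin down the topology in which the averaged observable is defined, check that the weak limit is still a normalized, positive and genuinely $\sigma$-additive effect-valued measure (finite additivity and normalization are automatic, $\sigma$-additivity is the delicate part), and verify that the two margins are unchanged --- the latter holding because the average of the covariantly transforming margins reproduces $\Qmu$ and $\Pnu$ themselves. Everything downstream of the reduction to a covariant joint observable is routine: the parametrization of covariant phase space observables by density operators and the margin computation in which the parity operator appears.
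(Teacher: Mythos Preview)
The paper does not prove this proposition; it is quoted as the main result of Carmeli, Heinonen and Toigo \cite{CaHeTo05}, so there is no in-paper argument to compare your attempt against.

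That said, your outline is a correct strategy and is essentially the one used in \cite{CaHeTo05}. The ``if'' direction is exactly the margin computation for the covariant phase space observable $\G^{T}$ (this is also recorded in the paper immediately after the proposition, in Eq.~\eqref{eq:GT}). For ``only if'' you reduce an arbitrary joint observable to a covariant one by averaging over phase space translations and then invoke the Holevo--Werner classification of covariant phase space observables; this is precisely the route taken in the cited reference. The obstacle you single out is the real one: an invariant-mean average over the noncompact group $\real^{2}$ yields a priori only a finitely additive effect-valued set function, and $\sigma$-additivity must be restored by an extra argument (in the literature this is done either by showing that the averaged object acquires an operator-valued density with respect to Lebesgue measure, or via a weak-$*$ compactness argument on the level of the generating operators). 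With that caveat acknowledged, your sketch matches the source the paper cites.
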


For instance, if the operator $T$ in Proposition \ref{prop:pos-mom-joint-condition} is a one-dimensional projection $T=\kb{\phi}{\phi}$, then the condition \eqref{eq:pos-mom-joint-condition} reads
\begin{equation}\label{eq:pos-mom-joint-condition-2}
d \mu(q)=\mo{\phi(-q)}^2 dq\, ,\qquad d \nu(p)=\mo{\hat{\phi}(-p)}^2 dp \, .
\end{equation}

Let $\Qmu$ and $\Pnu$ be a jointly measurable pair of a position and momentum observable and let $T$ be an operator satisfying conditions of Proposition \ref{prop:pos-mom-joint-condition}. A joint observable $\G_T$ for $\Qmu$ and $\Pnu$ is then given by formula
\begin{equation}\label{eq:GT}
\G_T(Z)=\frac{1}{2\pi} \int_Z U_{qp}TU_{qp}^\ast \ dqdp\, ,\quad Z\in\bor{\real^2}\, ,
\end{equation}
where $U_{qp}$ is the unitary operator defined by $U_{qp}\psi(x)=e^{ipx}\psi(x-q)$.

\section{Pairwise joint measurability}

In this section we study a question whether for a set of observables, the existence of a joint observable of the whole set is equivalent to the existence of joint observables for all pairs of observables from the set --- a fact that holds for sharp observables.

\begin{proposition}\label{prop:2sharp->3joint}
If observables $\A,\B$, and $\C$ are pairwisely jointly measurable and (at least) two of them are sharp, then the triplet $\A,\B,\C$ is jointly measurable.
\end{proposition}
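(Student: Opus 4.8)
The plan is to exploit the fact that two of the three observables — say $\B$ and $\C$ — are sharp, hence projection valued, and to use the pairwise joint measurability together with the commutation characterization of jointly measurable sharp observables (the Appendix result). First I would observe that since $\B$ and $\C$ are jointly measurable sharp observables, they commute, i.e. $[\B(Y),\C(Z)]=\nul$ for all $Y\in\salg_\B$, $Z\in\salg_\C$; let $\G_{\B\C}$ denote their (unique) joint observable, which on product sets is simply $\G_{\B\C}(Y\times Z)=\B(Y)\C(Z)$. The strategy is then to build a joint observable $\G$ of the triple on $\Omega_\A\times\Omega_\B\times\Omega_\C$ by suitably "conditioning" the joint observable of $\A$ and $\B$ on the sharp observable $\C$, and checking that the three marginals come out right.

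The key technical step is the following: let $\G_{\A\B}$ be a joint observable of $\A$ and $\B$, and let $\G_{\A\C}$ be a joint observable of $\A$ and $\C$. Because $\C$ is sharp, its effects $\C(Z)$ are projections, and I expect to define $\G$ on generating sets by a formula of the type
\begin{equation*}
\G(X\times Y\times Z) = \C(Z)\,\G_{\A\B}(X\times Y)\,\C(Z) + \big(\text{correction ensuring the $\A$-marginal is }\A(X)\big),
\end{equation*}
or, more cleanly, by using that $\C(Z)$ and $\C(\Omega_\C\setminus Z)$ give an orthogonal resolution of the identity and splitting $\G_{\A\B}$ accordingly. Concretely, I would try
\begin{equation*}
\G(X\times Y\times Z) = \sum_{W\in\{Z,\,\Omega_\C\setminus Z\}}\, (\text{pieces of }\G_{\A\C}\text{ compatible with }\B),
\end{equation*}
but the honest approach is: since $\C$ is sharp, decompose $\hi$ into the ranges of the spectral projections of $\C$; on each such block $\A$, $\B$, and their joint observable $\G_{\A\B}$ restrict (via compression), and within a block one only needs a joint observable of $\A$ and $\B$, which exists. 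One then reassembles these block-wise joint observables into a global $\G$. For this to work the compressions must themselves be observables and must have the correct marginals, which uses that $\C$ is projection valued so that compressing by $\C(Z)$ interacts well with the normalization $\sum_Z \C(Z)=\id$.

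The main obstacle I anticipate is verifying that the candidate $\G$ is \emph{positive} and has \emph{all three} marginals correct simultaneously — getting the $\A$-marginal and the $\B$-marginal right is easy from $\G_{\A\B}$, and getting the $\C$-marginal right is the point of the construction, but ensuring no conflict arises requires that $\G_{\A\B}(X\times Y)$ and $\C(Z)$ behave compatibly, which is not automatic since $\A$ and $\C$ need not commute. The resolution should be that we do not use a single fixed $\G_{\A\B}$ globally but rather a \emph{different} joint observable of $\A$ and $\B$ inside each spectral block of $\C$, obtained by compressing the joint observable $\G_{\A\C}$ of the (also jointly measurable) pair $\A,\C$; because $\C$ is sharp, $\G_{\A\C}$ restricted to the block where $\C$ takes a definite value is automatically a joint observable of $\A$ with the trivial (block) observable, and one then inflates $\B$'s joint-with-$\A$ data into it. Making this gluing precise — and checking $\sigma$-additivity and positivity of the glued map — is where the real work lies; everything else is bookkeeping with the defining marginal equations.
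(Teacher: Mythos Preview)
Your proposal misses the central observation that makes the proof short, and as a result you chase a phantom obstacle. You correctly invoke the Appendix characterization to conclude that the two sharp observables $\B$ and $\C$ commute and have the product-form joint observable $\G_{\B\C}(Y\times Z)=\B(Y)\C(Z)$. But that same Appendix result (Proposition~\ref{prop:commute}) applies equally to the pairs $(\A,\B)$ and $(\A,\C)$: since $\B$ and $\C$ are sharp and each is jointly measurable with $\A$, it follows that $\A$ \emph{commutes} with both $\B$ and $\C$. Your sentence ``$\A$ and $\C$ need not commute'' is therefore false under the hypotheses, and the ``main obstacle'' you anticipate does not exist.

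Once this is seen, no block decomposition or gluing is needed. The paper's argument is: since $\A(X)$ commutes with $\B(Y)$ and with $\C(Z)$ for all $X,Y,Z$, it commutes with every product $\B(Y)\C(Z)=\G_{\B\C}(Y\times Z)$; by the uniqueness lemma for operator valued measures on product $\sigma$-algebras (Lemma~\ref{lemma:GXY}), $\A(X)$ then commutes with $\G_{\B\C}(W)$ for all $W\in\salg_\B\otimes\salg_\C$. Hence $\A$ and $\G_{\B\C}$ commute and are jointly measurable, and any joint observable of $\A$ and $\G_{\B\C}$ is a joint observable of the triple $\A,\B,\C$. Your compression scheme $\G(X\times Y\times Z)=\C(Z)\G_{\A\B}(X\times Y)\C(Z)$ would in fact recover the $\A$-marginal correctly precisely \emph{because} $\A$ commutes with $\C$ --- so your construction works, but only for the reason you overlooked, and the detour through spectral blocks and ``different joint observables on each block'' is unnecessary.
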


\begin{proof}
Let, for instance, $\A$ and $\B$ be sharp observables. Since $\B$ and $\C$ are jointly measurable, they have a joint observable $\G$ which is determined by condition $\G(Y\times Z)=\B(Y)\C(Z)$ for every $Y\in\salg_\B,Z\in\salg_\C$ (see Proposition \ref{prop:commute} in Appendix). Fix $X\in\salg_\A$. Since $\A$ is jointly measurable and commutes with both $\B$ and $\C$,  the projection $\A(X)$ commutes with $\B(Y)\C(Z)$ for any $Y\in\salg_\B,Z\in\salg_\C$. Thus, the set functions $\A(X)\G(\cdot)$ and $\G(\cdot)\A(X)$ agree on the product sets of the form $Y\times Z$, which implies that they are the same (see Lemma \ref{lemma:GXY} in Appendix). We conclude that $\A$ and $\G$ commute and hence, are jointly measurable. This implies that the triplet $\A,\B,\C$ is jointly measurable.
\end{proof}

However, generally, if observables $\A,\B$ and $\C$ are pairwisely jointly measurable, it does not imply that the triplet $\A,\B,\C$ is jointly measurable. This is demonstrated in the following example.

\begin{example}
Let $\va,\vb,\vc\in\real^3$ be three orthogonal vectors, each of length $l$. The joint measurability condition \eqref{jm:busch} for any pair of  $\Ea$, $\Eb$, $\Ec$ then gives $l\leq\frac{1}{\sqrt{2}}$, while the joint measurability condition \eqref{jm:three} of all three means that $l\leq\frac{1}{\sqrt{3}}$. Therefore, the choice $\frac{1}{\sqrt{3}}<l\leq\frac{1}{\sqrt{2}}$ leads to observables $\Ea,\Eb,\Ec$ which are pairwisely jointly measurable but not  jointly measurable all together. 
\end{example}

\begin{figure}
\centerline{\includegraphics[scale=1]{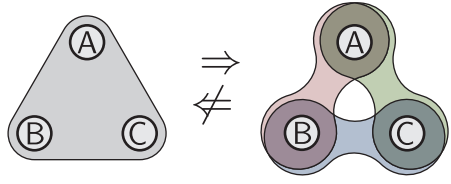}}
\caption{\label{fig:3vs2}
Joint observable for three observables $\A$, $\B$, $\C$ implies that there exist joint observables for each possible pair, $\A$-$\B$, $\A$-$\C$, and  $\B$-$\C$, but not vice versa.}
\end{figure}

We leave it as an open problem whether the conditions in Proposition \ref{prop:2sharp->3joint} can be relaxed. For instance, if it is sufficient that just one observable in the triplet is sharp. Summarizing this section, we demonstrated that pairwise joint measurability does not guarantee the existence of a joint observable for the whole set of observables, depicted in Fig. \ref{fig:3vs2}.

\section{Uniqueness of joint observables}\label{sec:uniqueness}

It is a well known fact that a joint observable is, in general, not unique. This is not surprising as in the definition of the joint observable requirements are set for the margins only and not for all effects in the range of $\G$.

In this section we study some conditions which would guarantee that a joint observable for a given pair of observables is unique. We start by illustrating this problem in the case of position and momentum observables.

\begin{example}\label{ex:gaussian}
Let $\Qmu$ and $\Pnu$ be jointly measurable position and momentum observables.
Assume that the probability measures $\mu$ and $\nu$ are such that the variances (i.e. squares of the standard deviations) $\var(\mu)$ and $\var(\nu)$ satisfy the minimum variance equation $\var(\mu)\cdot\var(\nu)=\frac{1}{4}$. In this case, an operator $T$ satisfying \eqref{eq:pos-mom-joint-condition} is a one dimensional projection $T=\kb{\phi}{\phi}$, where $\phi$ is a Gaussian function. This is consequence of the well-known fact that the functions satisfying the minimum variance equation are Gaussians. A Gaussian function $\phi$ is, on the other hand, determined up to a phase factor by functions $\mo{\phi(\cdot)}$ and $\mo{\hat{\phi}(\cdot)}$ and these are determined in \eqref{eq:pos-mom-joint-condition-2}. Therefore, there is a unique operator $T$ such that $\G_T$, defined in \eqref{eq:GT}, is a joint observable for $\Qmu$ and $\Pnu$. 

Generally, however, there may be two operators $T\neq T'$ such that both $\G_{T}$ and $\G_{T'}$ are joint observables of $\Qmu$ and $\Pnu$ (see e.g. Example 1 in \cite{CaHeTo05}). Moreover, up to authors' knowledge it is not known whether all joint observables of a jointly measurable pair $\Qmu$ and $\Pnu$ are of the form $\G_T$ for some trace class operator $T$. This example shows that there may exist several joint observables, that is, a joint observable is not always unique.
\end{example}

From the previous discussion naturally arises the question in which situations the joint observable is unique. On the other hand, if there are many joint observables, it would be convenient to have a way to compare different joint observables and perhaps to pick out the "best" one.

To proceed in our investigation, let us recall that the set of effects $\eh$ is a partially ordered set. For effects $A,B\in\eh$, the ordering $A\leq B$ means that
\begin{equation*}
\ip{\psi}{A\psi}\leq \ip{\psi}{B\psi}
\end{equation*}
for every $\psi\in\hi$. This is equivalent to the condition that
\begin{equation*}
\tr{\varrho A}\leq \tr{\varrho B}
\end{equation*}
for every $\varrho\in\sh$. Hence, the effect $B$ gives greater or equal probability in every state than $A$.

For two effects $A$ and $B$, we denote by $\lb{A,B}$ the set of their lower bounds, that is,
$$
\lb{A,B}:=\{ C\in\eh \mid C\leq A,C\leq B \}\, .
$$
This is always a non-empty set as $\nul\in\lb{A,B}$.

Let us have two jointly measurable observables $\A$ and $\B$ and let $\G$ be their joint observable. Then for every $X\in\salg_\A,Y\in\salg_\B$, the effect $\G(X\times Y)$ is in $\lb{\A(X),\B(Y)}$ since
\begin{equation*}
\A(X)=\G(X\times Y)+\G(X\times \neg Y)\geq \G(X\times Y) \, ,
\end{equation*}
and 
\begin{equation*}
\B(Y)=\G(X\times Y)+\G(\neg X\times Y)\geq \G(X\times Y) \, .
\end{equation*}

If either $\A$ or $\B$ (or both) is sharp, then the effect $\G(X,Y)$ is a special element in set $\lb{\A(X),\B(Y)}$ --- it is the greatest one. Indeed, it is a direct consequence of \cite[Corollary 2.3.]{MoGu99} that for every $X\in\salg_\A,Y\in\salg_\B$, the effect $\G(X\times Y)$ is the greatest element of $\lb{\A(X),\B(Y)}$, i.e., for every $C\in\lb{\A(X),\B(Y)}$, the relation $C\leq\G(X\times Y)$ holds. Moreover, in this case $\A$ and $\B$ commute and their joint observable $\G$ is unique, as shown in Proposition \ref{prop:commute} in Appendix.

These observations motivate the following definition.
 
\begin{definition}
A joint observable $\G$ of observables $\A$ and $\B$ is \emph{the greatest joint observable} if for every $X\in\salg_{\A},Y\in\salg_{\B}$, the effect $\G(X\times Y)$ is the greatest element of $\lb{\A(X),\B(Y)}$.
\end{definition}

The above property of being greatest and the uniqueness of joint observable are related in the following way.

\begin{proposition}\label{prop:infimum}
Let $\A$ and $\B$ be two jointly measurable observables and let $\G$ be their joint observable. If $\G$ is the greatest joint observable, then it is the unique joint observable.
\end{proposition}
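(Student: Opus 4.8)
The plan is to show that if $\G$ is a greatest joint observable of $\A$ and $\B$, then any other joint observable $\G'$ must coincide with $\G$ on all product sets $X\times Y$, and then to invoke the already-cited fact (Lemma \ref{lemma:GXY} in the Appendix) that an observable on a product $\sigma$-algebra is determined by its values on product sets. So uniqueness reduces to the equality $\G(X\times Y)=\G'(X\times Y)$ for every $X\in\salg_{\A}$, $Y\in\salg_{\B}$.

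First I would fix $X\in\salg_{\A}$ and $Y\in\salg_{\B}$ and recall the observation already made in the text: for \emph{any} joint observable $\G'$, the effect $\G'(X\times Y)$ lies in $\lb{\A(X),\B(Y)}$, because $\A(X)=\G'(X\times Y)+\G'(X\times\neg Y)\geq\G'(X\times Y)$ and similarly $\B(Y)=\G'(X\times Y)+\G'(\neg X\times Y)\geq\G'(X\times Y)$. Since $\G$ is greatest, this gives immediately $\G'(X\times Y)\leq\G(X\times Y)$. So one inequality is free; the content is the reverse inequality $\G(X\times Y)\leq\G'(X\times Y)$.

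To get the reverse inequality I would apply the greatest-element property of $\G$ not to the pair $(X,Y)$ but to the complementary pairs, and then use complementation. Concretely, $\G'(\neg X\times\neg Y)\in\lb{\A(\neg X),\B(\neg Y)}$, so by the greatest property of $\G$, $\G'(\neg X\times\neg Y)\leq\G(\neg X\times\neg Y)$. Now express both sides via the identity: for the margins $\A(\Omega_{\A})=\id$, $\B(\Omega_{\B})=\id$ one has $\G(\neg X\times\neg Y)=\id-\A(X)-\B(Y)+\G(X\times Y)$ and likewise for $\G'$. Substituting into $\G'(\neg X\times\neg Y)\leq\G(\neg X\times\neg Y)$ the terms $\id-\A(X)-\B(Y)$ cancel, leaving $\G'(X\times Y)\leq\G(X\times Y)$ — which is, disappointingly, the inequality we already had, not its reverse. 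The genuine trick, then, is to observe that if $\G$ is a greatest joint observable of $\A$ and $\B$, then $\G$ is \emph{also} greatest for the "flipped" pairs; in fact the four effects $\G(X\times Y)$, $\G(X\times\neg Y)$, $\G(\neg X\times Y)$, $\G(\neg X\times\neg Y)$ are each forced to be the greatest lower bound of the corresponding pair of margin effects, and the four greatest-lower-bound conditions together with the two marginal sum rules overdetermine the four effects. Writing $G:=\G(X\times Y)$, $G':=\G'(X\times Y)$, the marginal relations force all of $\G(X\times\neg Y)=\A(X)-G$, $\G(\neg X\times Y)=\B(Y)-G$, $\G(\neg X\times\neg Y)=\id-\A(X)-\B(Y)+G$, and similarly with $G'$. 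The greatest-element hypothesis applied to $(X,\neg Y)$ says $\A(X)-G$ dominates $\A(X)-G'$ (since the latter is a lower bound of $\A(X)$ and $\B(\neg Y)$), i.e. $G'\geq G$, which \emph{is} the reverse inequality. Combined with $G'\leq G$ from the first paragraph, $G=G'$.

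The main obstacle is precisely keeping straight which of the four greatest-element conditions yields which inequality, since a naive pairing just reproduces $G'\leq G$ twice; the resolution is to pair $(X,Y)$ against $(X,\neg Y)$ (or $(\neg X,Y)$) rather than against $(\neg X,\neg Y)$. One should also check that $\A(X)-G'$ really is a lower bound of $\A(X)$ and $\B(\neg Y)$ — the first is clear, and the second is $\A(X)-G'=\G'(X\times\neg Y)\le\A(X\cup\neg X\,;\,\neg Y)=\B(\neg Y)$, i.e. it is a margin-dominated effect of $\G'$, exactly as in the displayed computation in the text. Once these bookkeeping points are settled the argument is short, and the final step is the appeal to Lemma \ref{lemma:GXY} to upgrade equality on product sets to equality of observables, giving $\G'=\G$.
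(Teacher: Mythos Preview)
Your argument is correct, and it reaches the same conclusion by a slightly different route from the paper's. The paper argues by contradiction: assuming another joint observable $\F\neq\G$ exists, one has $\F(X'\times Y')\leq\G(X'\times Y')$ on every product set (since $\G$ is greatest), with strict inequality on at least one cell of the $2\times 2$ partition $\{X,\neg X\}\times\{Y,\neg Y\}$; summing over all four cells then gives $\id<\id$. Your argument is direct: you apply the greatest-element hypothesis at the two cells $(X,Y)$ and $(X,\neg Y)$ and use the marginal constraint $\G(X\times Y)+\G(X\times\neg Y)=\A(X)=\G'(X\times Y)+\G'(X\times\neg Y)$ to force equality on each summand. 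Both proofs rest on the same observation---that $\G'\leq\G$ holds on \emph{all} product sets---and then invoke an additive identity (your $\A(X)$, the paper's $\id$) to upgrade $\leq$ to $=$. Your version is a touch more explicit about exactly which two instances of the hypothesis do the work, at the cost of the expository detour through the uninformative pair $(\neg X,\neg Y)$; the paper's version is terser.
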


\begin{proof}
Let $\F$ and $\G$ be joint observables of $\A$ and $\B$ with $\G$ being the greatest one. We make a counter assumption that $\F\neq\G$. Then $\F(X\times Y)\neq\G(X\times Y)$ for some $X\in\salg_\A,Y\in\salg_\B$ by Lemma \ref{lemma:GXY}. Since $\G(X\times Y)$ is the greatest element in $\lb{\A(X),\B(Y)}$, this means that $\F(X\times Y)<\G(X\times Y)$ for any choice of $X$, $Y$. But this implies that
\begin{eqnarray*}
\id &=& \F(\Omega_\A\times\Omega_\B) \\
&=& \F(X\times Y)+\F(X\times \neg Y)+\F(\neg X\times Y)+\F(\neg X\times \neg Y) \\ 
&<& \G(X\times Y)+\G(X\times \neg Y)+\G(\neg X\times Y)+\G(\neg X\times \neg Y) \\
&=& \G(\Omega_\A\times\Omega_\B)=\id\, ,
\end{eqnarray*}
which is a contradiction. Thus, $\F=\G$.
\end{proof}

We conclude from Proposition \ref{prop:infimum} that two observables $\A$ and $\B$ can have at most one greatest joint observable. In the following example we demonstrate that a joint observable may be unique even if it is not greatest. 

\begin{example}\label{ex:unique}
Let $\va,\vb\in\real^3$ be two vectors satisfying
\begin{equation}
\label{eq:boundary}
\no{\va+\vb}+\no{\va-\vb}=2\, .
\end{equation}
According to Proposition \ref{prop:two-qubit} the corresponding qubit observables $\Ea$ and $\Eb$ are jointly measurable. As explained in \cite[Appendix 2]{StReHe08}, their joint observable $\G$ is unique and given by
\begin{equation}
\label{eq:joint_BuHe}
\G(i,j)=\no{\vn_{ij}}\half \left( \id + \vnn_{ij}\cdot\vsigma\right), \quad i,j\in\{0,1\} \, ,
\end{equation}
where $\vn_{ij}=\half[(-1)^{i+1}\va+(-1)^{j+1}\vb]$ and $\vnn_{ij}=\vn_{ij}/\no{\vn_{ij}}$. 

It is easy to see\footnote{Alternatively, one can apply \cite[Theorem 2]{GuGr96} to see that the greatest element of $lb(\Ea(i), \Eb(j))$ does not exist.} that $\G(i,j)$ is not the greatest element in the set $\lb{\Ea(i), \Eb(j)}$. Fix $0<t<\half$ and choose $\gamma$ from the interval $(t,\half)$. Then $C=\half (\gamma\id + t(\va+\vb)\cdot\vsigma )$ is an effect and $C\leq \Ea(1)$, $C\leq \Eb(1)$, while it is not true that $C\leq\G(1,1)$.  
\end{example}

As the property of being greatest is stronger than the uniqueness property, we are now seeking for a replacement for it.  For two effects $A$ and $B$, the set $\lb{A,B}$ may not have the greatest element. However, there always exists a \emph{maximal} element in $\lb{A,B}$ and every lower bound lies under some maximal lower bound \cite[Theorem 4.5]{MoGu99}. We recall that an effect $C\in\lb{A,B}$ is called maximal in $\lb{A,B}$ if there does not exist $D\in\lb{A,B}$ such that $C< D$. Clearly, if the greatest element in $\lb{A,B}$ exists, then it is the unique maximal element. Generally, however, $\lb{A,B}$ may have many maximal elements.

\begin{definition}
A joint observable $\G$ of observables $\A$ and $\B$ is \emph{maximal joint observable} if for every $X\in\salg_{\A},Y\in\salg_{\B}$, the effect $\G(X\times Y)$ is a maximal element of $\lb{\A(X),\B(Y)}$.
\end{definition}

The obvious questions are now whether maximal joint observables exist and whether this property has some connection to the uniqueness of joint observables. In the following we give partial answers to these questions.

\begin{proposition}\label{prop:maximal}
Let $\A$ and $\B$ be two simple (i.e. two-outcome) observables having a unique joint observable $\G$. Then $\G$ is a maximal joint observable.
\end{proposition}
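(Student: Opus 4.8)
The plan is to argue by contradiction: suppose $\G$ is a unique joint observable of the simple observables $\A$ and $\B$ but is not maximal. Since $\A$ and $\B$ are two-outcome, the joint observable is determined by a single effect, say $G_{11}:=\G(1,1)$, via the sum rules $\G(1,0)=\A(1)-G_{11}$, $\G(0,1)=\B(1)-G_{11}$, $\G(0,0)=\id-\A(1)-\B(1)+G_{11}$; conversely, any effect $G$ for which these three operators are again effects gives rise to a joint observable. Thus the set of joint observables is in bijection with the set $\mathcal{J}$ of effects $G$ satisfying $G\le\A(1)$, $G\le\B(1)$, and $\A(1)+\B(1)-\id\le G$. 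In particular $\mathcal{J}\subseteq\lb{\A(1),\B(1)}$, and uniqueness of $\G$ means $\mathcal{J}=\{G_{11}\}$.

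If $G_{11}$ were not maximal in $\lb{\A(1),\B(1)}$, there would be an effect $C\in\lb{\A(1),\B(1)}$ with $G_{11}<C$. The issue is that $C$ need not satisfy the third constraint $\A(1)+\B(1)-\id\le C$, so $C$ itself may fail to be in $\mathcal{J}$. The key step is therefore to show that this lower constraint is automatically preserved under passing to a larger lower bound in the relevant direction. Concretely, from $G_{11}<C$ and the identity $\G(0,0)=\id-\A(1)-\B(1)+G_{11}\ge\nul$ we have $\A(1)+\B(1)-\id\le G_{11}<C$, so the lower constraint does hold for $C$. Hence $C\in\mathcal{J}$, contradicting $\mathcal{J}=\{G_{11}\}$. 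This shows $G_{11}$ is maximal in $\lb{\A(1),\B(1)}$.

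It remains to handle the other three product effects $\G(1,0),\G(0,1),\G(0,0)$, i.e. to show each is maximal in the corresponding $\lb{\cdot,\cdot}$ set. The symmetry here is that relabelling the outcomes of $\A$ and/or of $\B$ permutes the four cells $\G(i,j)$ among themselves while sending joint observables to joint observables, so uniqueness is preserved under these relabellings. For instance, swapping the two outcomes of $\B$ turns $\G$ into the unique joint observable of $\A$ and the relabelled $\B$, whose "$(1,1)$-cell" is $\G(1,0)$; applying the argument of the previous paragraph to that pair gives maximality of $\G(1,0)$ in $\lb{\A(1),\B(0)}$. Doing this for the three nontrivial relabellings yields maximality of all four cells, so $\G$ is a maximal joint observable.

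I expect the main obstacle to be the verification in the second paragraph that the "lower" constraint $\A(1)+\B(1)-\id\le G$ is the only extra condition beyond membership in $\lb{\A(1),\B(1)}$, and that it is genuinely inherited when enlarging $G_{11}$ to $C$ — that is, making sure the two-outcome structure really does reduce $\mathcal{J}$ to $\lb{\A(1),\B(1)}$ intersected with a single further half-space in the right position. Once that is pinned down, the contradiction and the symmetry bookkeeping are routine.
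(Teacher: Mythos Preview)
Your proposal is correct and follows essentially the same route as the paper: assume some cell $\G(i,j)$ is not maximal in $\lb{\A(i),\B(j)}$, pick $C>\G(i,j)$ in that set, observe that the lower constraint $\A(i)+\B(j)-\id\le\G(i,j)<C$ is automatically inherited, and use $C$ to build a second joint observable, contradicting uniqueness. The only cosmetic difference is that the paper fixes an arbitrary pair $(i,j)$ from the outset rather than doing $(1,1)$ first and then invoking the outcome-relabelling symmetry, so your final paragraph is unnecessary but not wrong.
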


\begin{proof}
Fix $i,j\in\{0,1\}$. Assume, in contrary, that $\G(i,j)$ is not maximal in $\lb{\A(i),\B(j)}$. This means that there is an  effect $C\in\lb{\A(i),\B(j)}$ satisfying $\G(i,j)<C$. But then
\begin{equation*}
C>\G(i,j)\geq \A(i)+\B(i)-\id,
\end{equation*}
and hence $C$ defines a joint observable $\widetilde{\G}\neq\G$ of $\A$ and $\B$ through formulas
\begin{equation*}
\begin{split}
\widetilde{\G}(i,j)=C,\quad \widetilde{\G}(i,1-j)=\A(i)-C,\quad \widetilde{\G}(1-i,j)=\B(i)-C ,\\
\widetilde{\G}(1-i,1-j)=\id+C-\A(i) - \B(i).
\end{split}
\end{equation*}
This is, however, not possible as $\G$ is assumed to be the unique joint observable.
\end{proof}

As demonstrated in the following example, it may happen that among all joint observables of two jointly measurable observables $\A$ and $\B$, there is no maximal joint observable. 

\begin{example}\label{ex:no-maximal}
Choose a vector $\va\in\real^3$ with $0<\no{\va}<1$. Choose then $\beta\in\real$ such that $\half (1-\no{\va}^2)<\beta < 1-\no{\va}^2$. Finally, choose vector $\vb\in\real^3$ orthogonal to $\va$ and satisfying $\no{\vb}=\beta$. Then corresponding observables $\Ea$ and $\Ebb$ are jointly measurable due to the condition \eqref{jm:chinese}.

As observed in \cite{StReHe08}, the joint observables of $\Ea$ and $\Ebb$ are in one-to-one correspondence with the numbers from the interval $J\equiv[\beta - \half (1-\no{\va}^2), \half (1-\no{\va}^2)]$. Namely, if $\gamma\in J$, then
\begin{equation}
\label{eq:non_unique}
\G(1,1)=\frac{\gamma}{2}(\id+\hat\vb\cdot\vsigma)
\end{equation}
determines a joint observable. (Here $\hat{\vb}$ is the unit vector in the direction of $\vb$.)

Let  $\gamma_1,\gamma_2\in J$ and $\G^1(1,1)$, $\G^2(1,1)$ the corresponding effects, defined in Eq.~(\ref{eq:non_unique}). Then we have $\G^1(1,1) < \G^2(1,1)$ if and only if $\gamma_1<\gamma_2$. On the other hand, the effects $\G^l(0,1)$, $l=1,2$, are given by the expression
\[
\G^{l}(0,1)=\B-\G^{l}(1,1)=\frac{\beta-\gamma_{l}}{2}(\id+\hat\vb\cdot\vsigma).
\]
From this we see that $\G^1(0,1)>\G^2(0,1)$ if and only if $\gamma_1<\gamma_2$ and so the ordering of these effects is opposite to the ordering of the effects $\G^1(1,1)$ and $\G^2(1,1)$. Hence, we conclude that there is no maximal joint observable for $\Ea$ and $\Ebb$.
\end{example}
\renewcommand\arraystretch{2}
\begin{table}
\begin{tabular}{crcl}
\hline
$\A$, $\B$& property of $\G$ & implication & property of $\G$\\
\hline
\hline
sharp & $\G$ exists & $\Rightarrow$ & $\G$ is unique and the greatest\\[1ex]
\hline
general & $\G$ is unique & $\displaystyle{\genfrac{}{}{0pt}{}{\not\Rightarrow}{\Leftarrow}}$ & $\G$ is the greatest \\[1ex]
\hline
simple & $\G$ is unique & $\displaystyle{\genfrac{}{}{0pt}{}{\Rightarrow}{\Leftarrow\,\,\mathrm{(?)}}}$ & $\G$ is maximal\\[1ex]
\hline
general & $\G$ is unique & $\displaystyle{\genfrac{}{}{0pt}{}{\Rightarrow\,\,\mathrm{(?)}}{\Leftarrow\,\,\mathrm{(?)}}}$ & $\G$ is maximal\\[1ex]
\hline\\
\end{tabular}
\caption{\label{tab:uniqueness} Relations between uniqueness of the joint observable $\G$ and its maximality for different classes of observables $\A$ and $\B$. We say that the joint observable is greatest (maximal), if $\G(X\times Y)$ is the greatest (maximal) element in $\lb{\A(X),\B(Y)}$ for all $X,Y$.}
\end{table}
\renewcommand\arraystretch{1}

We summarize the results of this section in Table \ref{tab:uniqueness}. We note that uniqueness of the joint observable is not equivalent to being greatest. We leave it as open problem whether it is equivalent to maximality.

\section{Joint measurability of partitionings}

Let $\A$ be an observable with an outcome space $(\Omega_\A,\salg_\A)$. Instead of measuring $\A$ as such, we can pose a simpler question whether a measurement outcome belongs to a given set $X\in\salg_\A$ or not. Hence, we define a simple observable $\A^X$ with two outcomes '1' and '0',
\begin{equation*}
\A^X(1):=\A(X),\quad \A^X(0):=\A(\neg X)=\id-\A(X)\, .
\end{equation*}
We call $\A^X$ \emph{a partitioning of $\A$ with respect to $X$}.

For instance, if $\Omega_\A$ consists of four elements $x_1,x_2,x_3,x_4$, we can form 16 different partitionings, listed in Table~\ref{tab:partitions}. Hence, partitioning procedure means that several outcomes are identified as one. Two such partitionings are demonstrated in Fig.~\ref{fig:partitions}. 

\begin{figure}
\centerline{\includegraphics[scale=0.7]{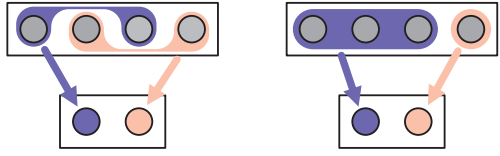}}
\caption{\label{fig:partitions}Examples of two different partitionings.}
\end{figure}

\renewcommand\arraystretch{1.25}
\begin{table}
\begin{tabular}{ccl}
\hline
Type & Number & Choice for $X$ (outcome '1' partitioning)\\
\hline
\hline
0 & 1 & $\emptyset$ (trivial)\\
\hline
1 & 4 & $\{00\}$, $\{01\}$, $\{10\}$, $\{11\}$\\
\hline
2 & 6 & 
\begin{tabular}{@{}l@{}}
$\{00,01\}$, $\{00,10\}$, $\{00,11\}$,\\
$\{01,10\}$, $\{01,11\}$, $\{10,11\}$\end{tabular}\\
\hline
3 & 4 & $\{00,01,10\}$, $\{00,01,11\}$, $\{00,10,11\}$, $\{01,10,11\}$\\
\hline
4 & 1 & $\Omega$ (trivial)\\
\hline\\
\end{tabular}
\caption{\label{tab:partitions}Possible two outcome partitionings of a four element set $\Omega=\{00,01,10,11\}$. The type is determined by the number of outcomes comprised in the '1' partition.}
\end{table}
\renewcommand\arraystretch{1}

Since the partitionings are simple observables, it is often easier to study the joint measurability of such partitionings of $\A$ and $\B$ rather than the joint measurability of $\A$ and $\B$ themselves. These questions are related in the following way.

\begin{proposition}\label{prop:partitionings}
Let $\A$ and $\B$ be two observables. Consider the following conditions: 
\begin{itemize}
\item[(i)] $\A$ and $\B$ are jointly measurable.
\item[(ii)] All partitionings of $\A$ and $\B$ are jointly measurable.
\end{itemize}
Condition $(i)$ implies $(ii)$. If either $\A$ or $\B$ (or both) is sharp, then $(i)$ and $(ii)$ are equivalent.
\end{proposition}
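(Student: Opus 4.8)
The plan is to prove the two implications separately, the first being essentially trivial and the second requiring the commutativity machinery developed for sharp observables.

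\medskip

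First I would show $(i)\Rightarrow(ii)$. Suppose $\G$ is a joint observable of $\A$ and $\B$, defined on $(\Omega_\A\times\Omega_\B,\salg_\A\otimes\salg_\B)$. Fix any $X\in\salg_\A$ and $Y\in\salg_\B$ and consider the partitionings $\A^X$ and $\B^Y$. I would simply write down the candidate joint observable $\G^{X,Y}$ on the four-point outcome space $\{0,1\}\times\{0,1\}$ by setting
\begin{equation*}
\G^{X,Y}(1,1)=\G(X\times Y),\ \G^{X,Y}(1,0)=\G(X\times\neg Y),\ \G^{X,Y}(0,1)=\G(\neg X\times Y),\ \G^{X,Y}(0,0)=\G(\neg X\times\neg Y).
\end{equation*}
These are effects (values of $\G$), they sum to $\G(\Omega_\A\times\Omega_\B)=\id$, and the margin conditions $\G^{X,Y}(1,*)=\G(X\times\Omega_\B)=\A(X)=\A^X(1)$ and similarly for the other three marginal relations follow directly from the definition of joint observable. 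Hence $\A^X$ and $\B^Y$ are jointly measurable, and since $X,Y$ were arbitrary, $(ii)$ holds.

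\medskip

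For the converse under the sharpness hypothesis, say $\A$ is sharp. Assume $(ii)$. The key point is that for a sharp observable, joint measurability of $\A^X$ with $\B^Y$ forces the projection $\A(X)$ to commute with $\B(Y)$: indeed two simple observables are jointly measurable iff they commute when one is sharp (this is the special case of Proposition \ref{prop:commute} for two-outcome observables), and $\A^X$ is sharp because $\A(X)$ is a projection. So $(ii)$ gives $[\A(X),\B(Y)]=0$ for all $X\in\salg_\A$, $Y\in\salg_\B$, i.e. $\A$ and $\B$ commute. By Proposition \ref{prop:commute} in the Appendix, commuting observables are jointly measurable (with joint observable $X\times Y\mapsto\A(X)\B(Y)$), so $(i)$ holds. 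I would also remark that the statement is symmetric: if instead $\B$ is sharp, the same argument applies with the roles exchanged, and if both are sharp either one works.

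\medskip

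The main obstacle — though a mild one — is making sure the two-outcome version of the commutativity-equals-joint-measurability result is legitimately available. Proposition \ref{prop:commute} is stated in the Appendix in "a slightly more general form," so I would cite it directly rather than re-prove the commutation step; the only care needed is to note that $\A^X(1)=\A(X)$ is genuinely a projection when $\A$ is sharp, so that $\A^X$ qualifies as a sharp observable and the theorem applies to the pair $(\A^X,\B^Y)$. Everything else is bookkeeping with the marginal identities.
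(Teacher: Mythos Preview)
Your proof is correct and follows essentially the same approach as the paper: the same four-point coarse-graining $\widetilde{\G}$ for $(i)\Rightarrow(ii)$, and the same appeal to Proposition~\ref{prop:commute} applied to the sharp partitionings $\A^X$ to deduce commutativity and hence joint measurability for the converse. The only cosmetic difference is your explicit remark on the symmetric case when $\B$ is sharp.
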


\begin{proof}
Assume that $\A$ and $\B$ are jointly measurable and $\G$ is their joint observable. Let $X\in\salg_\A$ and $Y\in\salg_\B$. Define a four outcome observable $\widetilde{\G}$ as
\begin{eqnarray*}
\widetilde{\G}(1,1) &=& \G(X\times Y)\, , \quad \widetilde{\G}(1,0) = \G(X\times \neg Y) \\
\widetilde{\G}(0,1) &=& \G(\neg X\times Y)\, , \quad \widetilde{\G}(0,0) = \G(\neg X\times \neg Y) \, .
\end{eqnarray*}
Then $\widetilde{\G}$ is a joint observable for $\A^X$ and $\B^Y$, and hence, (i) implies (ii).

Assume then that $\A$ is sharp and (ii) holds. Every partitioning of $\A$ is sharp. Hence, for every $X\in\salg_\A$ and $Y\in\salg_\B$, the observables $\A^X$ and $\B^Y$ commute (see Prop. \ref{prop:commute} in Appendix). But this means that $\A$ and $\B$ commute, which implies that they are jointly measurable.
\end{proof}

Generally, conditions (i) and (ii) are not equivalent if the assumption that either $\A$ or $\B$ is sharp is dropped. This is demonstrated in the following example.

\begin{example}\label{ex:partitionings}
Let $\va,\vb,\vc\in\real^3$ be mutually orthogonal vectors with norm $\frac{1}{\sqrt{2}}$. Observables $\Ea$ and $\Eb$ are jointly measurable and they have a unique joint observable $\G$; see Example \ref{ex:unique}. Similarly, $\Eb$ and $\Ec$ are jointly measurable and we denote by $\F$ their unique joint observable. The observables $\G$ and $\F$ are not jointly measurable. Indeed, their joint measurability would mean that all $\Ea,\Eb,\Ec$ are jointly measurable together, which is not true according to the criterion \eqref{jm:three}. However, every partitioning of $\G$ is jointly measurable with every partitioning of $\F$, as we illustrate in the following.

First of all, for the question of existence of a joint observable the outcome values are irrelevant. It is therefore enough to consider just partitionings of types 1 and 2 from Table \ref{tab:partitions}. 

Consider a case where both $\G$ and $\F$ have partitionings of the type 1. Since $\G(i,j)\leq\half\id$ and $\F(k,l)\leq\half\id$ for every set of indices $i,j,k,l$, these partitionings are jointly measurable in a trivial way; see Example 1 in \cite{BuHe08} . 

Let us then consider partitionings of the type 2. For instance, choose partitioning $X=\{11,10\}$ for $\G$ and $Y=\{11,01\}$ for $\F$. These partitionings then give $\G^X=\Ea$ and $\F^Y=\Ec$, which are jointly measurable. All other partitionings of the type 2 are seen to be jointly measurable in a similar way.

Finally, suppose that $\G$ has partitioning of type 1 and $\F$ of type 2. If the partitioning of $\F$ is chosen to be either $Y=\{11,10\}$ or $Y=\{11,00\}$, then $\F^{Y}=\Eb$ or $\F^{Y}$ is a trivial effect. Hence, both are jointly measurable with any type 1 partitioning of $\G$. If we choose $Y=\{11,01\}$, then $\F^Y=\Ec$. The partitionings of $\G$ are, according to Example \ref{ex:unique}, of the form $\G(i,j)=\E^{\eta,\vn_{ij}}(1)$, with $\eta=\half$, $\vn_{ij}=\half\left[(-1)^{i+1}\va+(-1)^{j+1}\vb\right]$, and $\no{\vn_{ij}}=\half$. The condition \eqref{jm:chinese} holds and therefore these partitionings are again jointly measurable.
\end{example}

It would be interesting to find a sufficient criterion (weaker than sharpness) for $\A$ and $\B$ which would guarantee the equivalence of the conditions (i) and (ii) in Proposition \ref{prop:partitionings}.

\section{Conclusions}

As soon as positive operator valued measures are accepted as the correct mathematical description for quantum observables, the definition of joint measurability follows naturally from the statistical structure of quantum mechanics. Apart from its foundational importance, the existence of a joint observable is of practical use, for example, in secure communication. Commutativity must be seen only as a sufficient criterion for joint measurability. The special form of sharp observables allows them to have only joint observables of definite form.

In this manuscript we have investigated several features which accompany joint measurability in the case of sharp observables. Namely, (i) the equivalence between pairwise and common joint measurability for a set of observables, (ii) uniqueness and maximality of a joint observable, and (iii) joint measurability of observables and their partitionings. Knowledge of such relations would allow us to substantially simplify the search for a joint observable. 

To conclude, we have found that all three features (i)--(iii) are not valid for general observables. Nevertheless, we hope that these observations provide one step for achieving better understanding of the general features of joint observables.

\section*{Appendix}

\begin{definition}
Let $(\Omega,\salg)$ be measurable space. A set function $\A:\salg\to\lh$ is an \emph{operator valued measure} if
\begin{itemize}
\item $\A(\emptyset)=\nul$\, ;    
\item $\ip{\psi}{\A(\cup_{j=1}^{\infty} X_j) \psi} = \sum_{j=1}^{\infty} \ip{\psi}{\A(X_j)\psi}$ for all vectors $\psi\in\hi$ and all sequences $\{X_j\}$ of disjoint sets in $\salg$.
\end{itemize}
\end{definition}

An operator valued measure $\A$ is thus an observable if each operator $\A(X)$ is positive and $\A(\Omega)=\id$.

\begin{lemma}\label{lemma:GXY}
Let $\G$ and $\F$ be two operator valued measures defined on $\Omega_1\times\Omega_2$. If 
\begin{equation}\label{eq:GXY=FXY}
\G(X\times Y)=\F(X\times Y)\quad \forall X\in\salg_1, Y\in\salg_2\, ,
\end{equation}
then $\G=\F$.
\end{lemma}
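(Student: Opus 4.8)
The plan is to use a standard monotone-class / Dynkin-system argument to upgrade the agreement on measurable rectangles to agreement on the whole product $\sigma$-algebra $\salg_1\otimes\salg_2$. Fix a unit vector $\psi\in\hi$ and consider the two finite (signed, in fact positive and real-valued) measures $X\mapsto\ip{\psi}{\G(X)\psi}$ and $X\mapsto\ip{\psi}{\F(X)\psi}$ on $(\Omega_1\times\Omega_2,\salg_1\otimes\salg_2)$; these are genuine measures because $\G$ and $\F$ are operator valued measures. By hypothesis \eqref{eq:GXY=FXY}, these two scalar measures agree on the collection $\mathcal{R}$ of measurable rectangles $X\times Y$, $X\in\salg_1$, $Y\in\salg_2$.

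First I would note that $\mathcal{R}$ is a $\pi$-system (it is closed under finite intersections, since $(X\times Y)\cap(X'\times Y')=(X\cap X')\times(Y\cap Y')$) and that it generates $\salg_1\otimes\salg_2$ by definition. Then I would invoke the uniqueness theorem for measures that agree on a generating $\pi$-system: two measures that are finite (here, both have total mass $\ip{\psi}{\id\psi}=\no{\psi}^2$, since $\G(\Omega_1\times\Omega_2)=\F(\Omega_1\times\Omega_2)=\id$) and agree on a $\pi$-system agree on the generated $\sigma$-algebra. Hence $\ip{\psi}{\G(Z)\psi}=\ip{\psi}{\F(Z)\psi}$ for every $Z\in\salg_1\otimes\salg_2$.

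Finally, since this holds for every unit vector $\psi$ (and hence, by scaling, for every $\psi\in\hi$), the operators $\G(Z)$ and $\F(Z)$ have the same diagonal quadratic form; by polarization this forces $\G(Z)=\F(Z)$ for all $Z\in\salg_1\otimes\salg_2$, i.e. $\G=\F$.

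I do not anticipate a serious obstacle here: the only point that needs a little care is the rigorous justification that agreement on the rectangles extends to the full product $\sigma$-algebra, and for this one should either cite the standard $\pi$-$\lambda$ (Dynkin) theorem or spell out a one-line monotone-class argument, taking care that the relevant scalar measures are finite so that the uniqueness extension applies. The passage from quadratic forms back to operators is routine via polarization.
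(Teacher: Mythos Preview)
Your approach is essentially the same as the paper's: reduce to the scalar measures $Z\mapsto\ip{\psi}{\G(Z)\psi}$ and $Z\mapsto\ip{\psi}{\F(Z)\psi}$, use that the rectangles form a generating $\pi$-system so that two finite measures agreeing on them must agree on all of $\salg_1\otimes\salg_2$, and then pass back to operators via the quadratic form. One small correction: the lemma is stated for general operator valued measures, not observables, so you cannot assume the scalar measures are positive or that $\G(\Omega_1\times\Omega_2)=\id$; they are in general complex measures (as the paper notes), but since complex measures have finite total variation the $\pi$-$\lambda$ uniqueness argument still applies without change.
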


\begin{proof}
Fix $\psi\in\hi$. Condition (\ref{eq:GXY=FXY}) implies that the complex measures $\ip{\psi}{\G(\cdot)\psi}$ and $\ip{\psi}{\F(\cdot)\psi}$ agree on all rectangles and hence also on all countable disjoint unions of rectangles. Therefore, $\ip{\psi}{\G(Z)\psi}=\ip{\psi}{\F(Z)\psi}$ for every $Z\in\salg_1\otimes\salg_2$. Since this holds for any $\psi\in\hi$, we conclude
 that $\G(Z)=\F(Z)$ for every $Z\in\salg_1\otimes\salg_2$. Thus, $\G=\F$. 
\end{proof}

\begin{proposition}\label{prop:commute}
Let $\A$ and $\B$ be jointly measurable observables. If (at least) one of them is sharp, then they commute and they have a unique joint observable $\G$. This joint observable is determined by the condition
\begin{equation}\label{eq:GAB}
\G(X\times Y)=\A(X)\B(Y) \qquad \forall X\in\salg_\A,Y\in\salg_\B\,  .
\end{equation}
\end{proposition}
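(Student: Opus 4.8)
The plan is to establish the two assertions --- commutativity and the explicit form \eqref{eq:GAB} of the unique joint observable --- in the following order. First I would reduce the problem to a statement about a single sharp effect. Suppose $\A$ is sharp and let $\G$ be a joint observable of $\A$ and $\B$. Fix $Y\in\salg_\B$. The key observation is that for every $X\in\salg_\A$ we have the operator inequality $\nul\leq\G(X\times Y)\leq\A(X)$, because $\G(X\times Y)+\G((\neg X)\times Y)=\B(Y)\leq\id$ forces $\G(X\times Y)\leq\id$ as well, but more to the point $\G(X\times Y)\leq\G(\Omega_\A\times Y)$-type reasoning together with $\A(X)=\G(X\times\Omega_\B)\geq\G(X\times Y)$ gives $\G(X\times Y)\leq\A(X)$. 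Now I would invoke the standard fact that whenever $P$ is a projection and $C$ is an effect with $C\leq P$, one has $C=PC=CP=PCP$; the quickest route is to write $P C P \leq P$ and $(\id-P)C(\id-P)\leq(\id-P)$, combine with $C\geq 0$ to deduce $(\id-P)C(\id-P)\leq (\id-P)C(\id-P)$ is forced to vanish via $\langle\psi|(\id-P)C(\id-P)\psi\rangle\leq\langle\psi|(\id-P)\psi\rangle - \langle\psi|C\psi\rangle$... in short, $C\leq P$ and $C\geq 0$ imply $\ran(C)\subseteq\ran(P)$, hence $C$ is supported on $\ran(P)$ and $PC=CP=C$. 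Applying this with $P=\A(X)$ and $C=\G(X\times Y)$ gives $\G(X\times Y)=\A(X)\G(X\times Y)=\G(X\times Y)\A(X)$.

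Next I would leverage this to compute $\G(X\times Y)$ explicitly. Since $\A(X)$ and $\A(\neg X)=\id-\A(X)$ are complementary projections and $\G(\neg X\times Y)\leq\A(\neg X)$ by the same argument, we get $\A(X)\G(\neg X\times Y)=0$. Therefore
\begin{equation*}
\A(X)\B(Y)=\A(X)\left[\G(X\times Y)+\G(\neg X\times Y)\right]=\A(X)\G(X\times Y)=\G(X\times Y)\, ,
\end{equation*}
which is precisely \eqref{eq:GAB}. This identity, holding for all rectangles $X\times Y$, has three consequences I would spell out. First, taking adjoints, $\G(X\times Y)=\G(X\times Y)^\ast=\B(Y)\A(X)$, so $\A(X)\B(Y)=\B(Y)\A(X)$ for all $X,Y$; this is the commutativity claim. (If instead $\B$ is the sharp one, the symmetric argument gives $\G(X\times Y)=\B(Y)\A(X)=\A(X)\B(Y)$ again.) Second, the right-hand side $\A(X)\B(Y)$ is indeed a positive operator --- it equals $\A(X)\B(Y)\A(X)$ by commutativity, manifestly positive --- so it genuinely defines an operator valued measure on rectangles that extends to a joint observable; conversely any joint observable must agree with it on rectangles, hence by Lemma \ref{lemma:GXY} there is exactly one. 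Third, the margins check out: $\G(X\times\Omega_\B)=\A(X)\B(\Omega_\B)=\A(X)$ and $\G(\Omega_\A\times Y)=\A(\Omega_\A)\B(Y)=\B(Y)$.

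The main obstacle is the operator-theoretic lemma that $\nul\leq C\leq P$ for a projection $P$ forces $PC=CP=C$; everything else is bookkeeping with the margin conditions and an appeal to Lemma \ref{lemma:GXY} for uniqueness. I would prove that lemma cleanly by noting that for a unit vector $\psi$ with $P\psi=0$ we have $0\leq\langle\psi|C\psi\rangle\leq\langle\psi|P\psi\rangle=0$, so $C^{1/2}\psi=0$, i.e. $\ker P\subseteq\ker C^{1/2}=\ker C$, equivalently $\ran C\subseteq(\ker P)^\perp=\ran P$; since also $C$ is selfadjoint, $\ran C\subseteq\ran P$ gives $CP=C$ and $PC=(CP)^\ast\cdot$-type symmetry $PC=C$. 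A small care point is that the argument as written assumes $\A$ (say) is the sharp one and derives $\G(X\times Y)\leq\A(X)$; the inequality $\G(X\times Y)\leq\B(Y)$ always holds by the margin relation, but we only need the bound against the \emph{sharp} margin, which is available precisely under the hypothesis. I do not anticipate any issue with $\sigma$-additivity: we only ever use finite (two-set) decompositions of the margins, and the extension from rectangles to the product $\sigma$-algebra is exactly what Lemma \ref{lemma:GXY} supplies.
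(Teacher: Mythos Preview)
Your proof is correct and follows essentially the same route as the paper: from $\G(X\times Y)\leq\A(X)$ and the projection lemma you deduce $\A(X)\G(X\times Y)=\G(X\times Y)$ and $\A(X)\G(\neg X\times Y)=\nul$, whence $\A(X)\B(Y)=\G(X\times Y)=\B(Y)\A(X)$, and uniqueness follows from Lemma \ref{lemma:GXY}. The only differences are cosmetic --- you spell out the proof of the range-inclusion lemma and add some superfluous verifications (positivity, margins) that the proposition does not require since existence of $\G$ is assumed.
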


\begin{proof}
Let, for instance, $\A$ be a sharp observable and suppose that $\G$ is a joint observable for $\A$ and $\B$. Since 
\begin{equation*}
\G(X\times Y)\leq \G(X\times \Omega_\B)=\A(X)\, ,
\end{equation*}
it follows that the range of $\G(X\times Y)$ is contained in the range of $\A(X)$. Therefore,
\begin{equation*}
\A(X)\G(X\times Y)=\G(X\times Y) \, ,
\end{equation*}
and taking adjoint on both sides we also get
\begin{equation*}
\G(X\times Y)\A(X)=\G(X\times Y)\, .
\end{equation*}
Applying these results to the complement set $\neg X$ we get
\begin{equation*}
\A(X)\G(\neg X\times Y)=\left( \id - \A(\neg X) \right) \G(\neg X\times Y) =\nul
\end{equation*}
and similarly
\begin{equation*}
\G(\neg X\times Y) \A(X)=\nul.
\end{equation*}
It then follows that
\begin{eqnarray*}
\A(X)\B(Y) &=& \A(X)\G(\Omega_\A\times Y)=\A(X) \left( \G(X\times Y)+\G(\neg X\times Y) \right) \\
&=& \G(X\times Y)
\end{eqnarray*}
and similarly
\begin{equation*}
\B(Y)\A(X)=\G(X\times Y) \, .
\end{equation*}
A comparison of these equations shows that $\A$ and $\B$ commute and (\ref{eq:GAB}) holds. As $\G$ is determined by its values of product sets, it is unique.
\end{proof}

\section*{Acknowledgement}

This work was supported by projects RPEU-0014-06 and
QIAM APVV-0673-07.

\end{document}